\DeclareMathOperator*{\rms}{rms}
\DeclareMathOperator*{\argmin}{arg\;min}
\newtheorem{mydef}{Definition}
\newtheorem{assumption}{Assumption}
\newtheorem{remark}{Remark}
\begin{document}

\begin{frontmatter}
%\runtitle{Insert a suggested running title}  % Running title for regular 
                                              % papers but only if the title  
                                              % is over 5 words. Running title 
                                              % is not shown in output.

\title{Parametric identification of parallel Wiener-Hammerstein systems} % Title, preferably not more 
                                                % than 10 words.

\thanks[footnoteinfo]{The corresponding author is M.~Schoukens (maarten.schoukens@vub.ac.be).}
\thanks[johanPattyn]{The authors would like to thank Johan Pattyn for designing and building the parallel Wiener\-/Hammerstein system used in Section \ref{sec:measurement}.}
\thanks[preprint]{This paper is a postprint of a paper submitted to and accepted for publication in Automatica. This manuscript version is made available under the CC-BY-NC-ND 4.0 license. The published copy of record is available through: https://doi.org/10.1016/j.automatica.2014.10.105}
% Add the e-mail address (ead) as shown
\author{Maarten Schoukens, Anna Marconato, Rik Pintelon, Gerd Vandersteen, Yves Rolain} %\ead{maarten.schoukens@vub.ac.be}    
\address{Vrije Universiteit Brussel (VUB), Dept. ELEC, Pleinlaan 2, B-1050 Brussels, Belgium}  % Please supply full addresses here.

\begin{keyword}                           % Five to ten keywords, chosen from the IFAC 
System Identification, Nonlinear Systems, Wiener\-/Hammerstein, LNL, Parallel Connection
\end{keyword}                             % keyword list or with the 
                                          % help of the Automatica 
                                          % keyword wizard

\begin{abstract}                          % Abstract of not more than 200 words.
	Block\-/oriented nonlinear models are popular in nonlinear modeling because of their advantages to be quite simple to understand and easy to use. To increase the flexibility of single branch block\-/oriented models, such as Hammerstein, Wiener, and Wiener\-/Hammerstein models, parallel block\-/oriented models can be considered. This paper presents a method to identify parallel Wiener\-/Hammerstein systems starting from input-output data only. In the first step, the best linear approximation is estimated for different input excitation levels. In the second step, the dynamics are decomposed over a number of parallel orthogonal branches. Next, the dynamics of each branch are partitioned into a linear time invariant subsystem at the input and a linear time invariant subsystem at the output. This is repeated for each branch of the model. The static nonlinear part of the model is also estimated during this step. The consistency of the proposed initialization procedure is proven. The method is validated on real-world measurements using a custom built parallel Wiener\-/Hammerstein test system.
\end{abstract}

\end{frontmatter}

\section{Introduction} \label{sec:introduction}
	Nonlinear models are much needed these days to improve plant control performance, to gain better insight in the behavior of the system under test, or to compensate for a potential nonlinear behavior. Due to the separation of the nonlinear dynamic behavior into linear time invariant (LTI) dynamics and the static nonlinearities (SNL), block\-/oriented nonlinear models are quite simple to understand and easy to use. 
	
	A wide variety of block\-/oriented models has been studied over the last years including Hammerstein (Nonlinear static - Linear dynamic or N-L connection) and Wiener models (L-N) \cite{Giri2010}. This type of single branch models can be extended to Hammerstein-Wiener models (N-L-N) \cite{Bai1998,Crama2004b,SchoukensM2012}, or Wiener\-/Hammerstein models (L-N-L) \cite{Billings1978,Vandersteen1997,Sjoberg2012,Westwick2012,SchoukensM2014}. To increase the flexibility of the single branch block\-/oriented models even more, parallel block\-/oriented models can be considered such as parallel Hammerstein \cite{Gallman1975,SchoukensM2011}, and parallel Wiener models \cite{SchoukensM2012b,Lyzell2012a,SchoukensM2013a}. 
	
	This paper presents a method to identify parallel Wiener\-/Hammerstein systems, whose structure is shown in Figure \ref{fig:parallelWienerHammerstein}. Previously published methods \cite{Baumgartner1975,Wysocki1976,Billings1979} studied a subclass of the parallel Wiener\-/Hammerstein structure that is called the $S_M$ model structure. Identification methods based on repeated sine measurements \cite{Baumgartner1975,Wysocki1976}, or white Gaussian inputs \cite{Billings1979} are available for this model structure. In \cite{Palm1978,Palm1979} it is shown that a wide class of Volterra systems can be approximated arbitrary well using a parallel Wiener\-/Hammerstein model structure. However, no method is presented there to identify such models.
	
	The $S_M$ identification method presented in \cite{Billings1979} uses Gaussian excitation signals, like the method presented in this paper. However, the $S_M$ method is a generalization of a Wiener-Hammerstein identification algorithm based on a parametrized version of higher order correlation functions between input and output \cite{Billings1978}. This approach has been compared in \cite{SchoukensM2014} with two other approaches \cite{Westwick2012,SchoukensM2014}, and it was outperformed by these alternatives. The main problem of the method seems to be the noise sensitivity.
	
	The parallel Wiener\-/Hammerstein identification approach proposed here combines the parallel Hammerstein and parallel Wiener identification methods presented in \cite{SchoukensM2011,SchoukensM2012b} with a specific initialization approach for Wiener\-/Hammerstein systems presented in \cite{Sjoberg2012}. This paper hereby extends the results of \cite{SchoukensM2013b}. In the paper presented here, the consistency of the proposed initialization procedure is proven, the computational aspects of the proposed method are discussed, the positive effect of the initialization method is shown, and the method is applied to a real-world measurement example.
	
	The outline of the paper is as follows. Section \ref{sec:classIntro} introduces the system and signal classes, and the stochastic framework used. Section \ref{sec:identifiability} discusses the identifiability of a parallel Wiener\-/Hammerstein system. Next, the best linear approximation (BLA) of a parallel Wiener\-/Hammerstein system is studied in Section \ref{sec:bla}. The identification algorithm for parallel Wiener\-/Hammerstein systems is explained in Section \ref{sec:estimation}. Section \ref{sec:persistence} discusses the persistence of excitation, Section \ref{sec:consistency} proves the consistency of the proposed identification method. A final, jointly nonlinear least squares optimization with respect to all the parameters of all the blocks is performed in Section \ref{sec:optimization}. Some computational aspects of the method are discussed in Section \ref{sec:computationAspects}. Finally, the good performance of the proposed method is illustrated in Section \ref{sec:measurement} on real-world measurements using a custom built parallel Wiener\-/Hammerstein test system. The positive effect of the proposed initialization method on the performance of the optimized model is also shown in this section.
	
	\begin{figure}
		\centering
			\includegraphics[width=0.95\columnwidth]{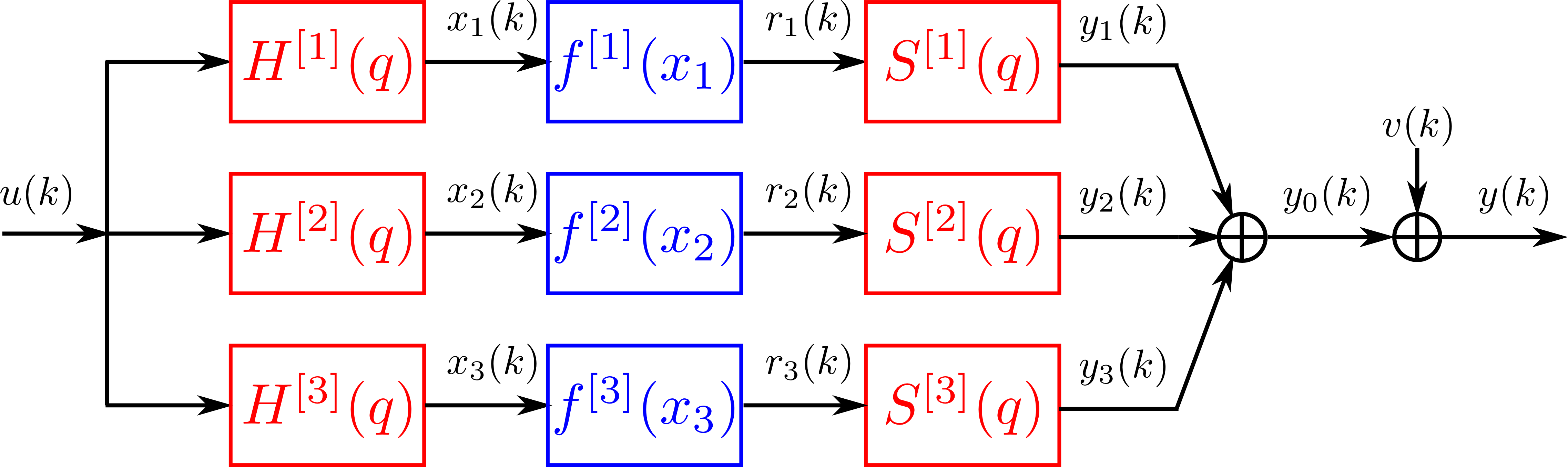}
		\caption{A 3-branch parallel Wiener\-/Hammerstein system: a parallel connection of Wiener\-/Hammerstein systems. The static nonlinear block $f^{[i]}$ of the $i$-th branch is sandwiched in between the LTI blocks $H^{[i]}(q)$ and $S^{[i]}(q)$. The noise source $v(k)$ is additive colored noise.}
		\label{fig:parallelWienerHammerstein}
	\end{figure}	
		
\section{System, signals and stochastic framework} \label{sec:classIntro}
		This section describes the system and signal classes, and introduces the stochastic framework considered in this paper.

		\begin{mydef}
			\textbf{Riemann equivalence class of asymptotically normally distributed excitation signals.}
			Consider a signal $u$ with a power spectrum $S_U(j\omega)$, which is piecewise continuous, with a finite number of discontinuities. A random signal belongs to the Riemann equivalence class of $u$ if it obeys by any
of the following statements:
			\begin{enumerate}
				\item It is a Gaussian noise excitation with power spectrum $S_U(j\omega)$.
				\item It is a random multisine or random phase multisine \cite{Pintelon2012} such that:
					\begin{align}
						\frac{1}{N} \sum_{k = k_1}^{k_2} E\left\{ \left| U\left( j\omega_k \right) \right|^2 \right\} &= \frac{1}{2 \pi} \int_{\omega_{k_1}}^{\omega_{k_2}} S_{U}\left( \nu \right) d\nu  + O\left( N^{-1} \right), \nonumber
					\end{align}
					with $\omega_k = k \frac{2 \pi f_s}{N}$, $k \in \mathbb{N}$, $0<\omega_{k_1}<\omega_{k_2}<\pi f_s$, and $f_s$ the sample frequency.
			\end{enumerate}		
		\end{mydef}
		
		\begin{assumption} \label{ass:input}
			The excitation signal $u(k)$ is stationary and belongs to the Riemann equivalence class of asymptotically normally distributed excitation signals.
		\end{assumption}

		\begin{assumption} \label{ass:noise}
			An additive, colored zero-mean noise source $v(k)$ with a finite variance is present at the output of the system only:
				\begin{align}
					y(k) = y_0(k) + v(k), \label{eq:noise}
				\end{align}
				where $y(k)$, $y_0(k)$ and $v(k)$ are scalars. The noise $v(k)$ is assumed to be independent from the known input $u(k)$.
		\end{assumption}
		Assumption \ref{ass:noise} excludes that the system operates in closed loop.

		The class of parallel Wiener\-/Hammerstein systems is considered. A parallel Wiener\-/Hammerstein system consists of a parallel connection of different Wiener\-/Hammerstein systems that share the same input signal. The output of the total system is obtained as the sum of the outputs of the different branches. A parallel Wiener\-/Hammerstein system with three parallel branches is shown in Figure \ref{fig:parallelWienerHammerstein}.
		
		The noiseless output $y_0(k)$ of a parallel Wiener\-/Hammerstein system is given by:
		\begin{align}
			y_0(k) &= \sum_{i=1}^{n_{br}} y_i(k), \label{eq:ParWhOuptut1} \\
			y_i(k) &= S^{[i]}(q) r_i(k), \\
			r_i(k) &= f^{[i]}(x_i(k)), \\
			x_i(k) &= H^{[i]}(q) u(k), \label{eq:ParWhOuptut2} 
		\end{align}
		where $n_{br}$ is the number of parallel branches in the parallel Wiener\-/Hammerstein system, $H^{[i]}(q)$ and $S^{[i]}(q)$ are the front and back discrete time representations of the LTI blocks present in branch $i$, $f^{[i]}(x_i(k))$ is the static nonlinear block present in branch $i$, and the signals are as shown in Figure \ref{fig:parallelWienerHammerstein}.
		
		All the LTI blocks are considered to be modeled by stable infinite impulse response (IIR) filters, parameterized by a rational function in the backwards shift operator $q^{-1}$:
		\begin{align}
			H^{[i]}(q) &= \frac{B_{h}^{[i]}(q)}{A_{h}^{[i]}(q)}, \label{eq:Lti1} \\
					 	 &= \frac{b_{h,0}^{[i]} + b_{h,1}^{[i]}q^{-1} + \ldots + b_{h,n_{b_{h},i}}^{[i]}q^{-n_{b_{h},i}}}{a_{h,0}^{[i]} + a_{h,1}^{[i]}q^{-1} + \ldots + a_{h,n_{a_{h},i}}^{[i]}q^{-n_{a_{h},i}}}, \nonumber  \\
			S^{[i]}(q) &= \frac{B_{s}^{[i]}(q)}{A_{s}^{[i]}(q)}, \label{eq:Lti2} \\
					 	 &= \frac{b_{s,0}^{[i]} + b_{s,1}^{[i]}q^{-1} + \ldots + b_{s,n_{b_{s},i}}^{[i]}q^{-n_{b_{s},i}}}{a_{s,0}^{[i]} + a_{s,1}^{[i]}q^{-1} + \ldots + a_{s,n_{a_{s},i}}^{[i]}q^{-n_{a_{s},i}}}, \nonumber 
		\end{align}
		where $n_{b_{h},i}$ and $n_{a_{h},i}$ are respectively the finite orders of the numerator and denominator of the front dynamics of the $i$-th parallel branch, and $n_{b_{s},i}$ and $n_{a_{s},i}$ are the orders of the numerator and denominator of the back dynamics of the $i$-th parallel branch.
		
		The static nonlinear function $f^{[i]}(x_i(k))$ contained in the $i$th branch is described by a linear combination of $n_f$ nonlinear basis functions:
		\begin{align}
			f^{[i]}(x_i(k)) &= \sum_{j=1}^{n_f} \beta_f^{[i]} f_j^{[i]}(x_i(k)). \label{eq:snl}
		\end{align}
		Each basis function $f_j^{[i]}(x)$ is assumed to have a finite output for any finite input $x$.	Examples of such nonlinear functions are polynomial functions, piecewise linear functions or radial basis function networks.
		
		\begin{assumption} \label{ass:system} 
			The true system is a discrete time parallel Wiener\-/Hammerstein system, as described by eq. (\ref{eq:noise}) to (\ref{eq:snl}).
		\end{assumption}
		
		The parallel Wiener\-/Hammerstein system class that is used here is a more general system class than the $S_M$ system class that is used in \cite{Baumgartner1975,Wysocki1976,Billings1979}. The $S_M$ model has $M$ parallel branches, and the $m$-th branch contains a monomial nonlinearity equal to $(.)^m$. This restricts the model to have a polynomial nonlinearity only, and to contain only one branch for each degree of this polynomial nonlinearity. Thus a parallel Wiener-Hammerstein system containing two parallel branches, each with different LTI subsystems, and with different polynomial nonlinearities can, in general, not be modeled by a $S_M$ model. The method that is presented in this paper also makes some extra assumptions on the parallel Wiener-Hammerstein system in the following sections. However, even when these assumptions are met, the considered system class still allows for a much more complicated nonlinear system behavior.		
		
\section{Identifiability} \label{sec:identifiability}
	The problem of identifying a parallel Wiener\-/Hammerstein system inherits all the identifiability issues that are present in the identification of a Wiener\-/Hammerstein system \cite{SchoukensM2014,SchoukensM2013b}: a gain exchange between the LTI blocks and the static nonlinear block leads to a degeneracy in the parameter space. There can also be a delay exchange between the front LTI blocks and the back LTI blocks, but only a finite number of delay exchanges values are possible when a parametric transfer function model is used to model the LTI blocks. A degeneration in the parameter space results in multiple parameterizations that lead to the same input-output behavior of the system. The rank of the Jacobian matrix of the model is reduced by one for each degeneration.
	
	An additional identifiability issue appears due to the parallel nature of the parallel Hammerstein, the parallel Wiener, and the parallel Wiener\-/Hammerstein systems \cite{SchoukensM2012b,SchoukensM2013b}. Starting from input-output data only, infinitely many equivalent models can be obtained by linear transformation of one of the models. This introduces a full rank linear transformation between the outputs of the front dynamic blocks $H^{[i]}(q)$ and the inputs of the static nonlinearities of the different branches. A similar full rank linear transformation can be introduced between the outputs of the static nonlinearities and the inputs of the back LTI blocks $S^{[i]}(q)$. Such a full rank transformation results in a model structure that differs from the model structure presented in Figure \ref{fig:parallelWienerHammerstein}. The full rank linear transformations that can be present between the front LTI blocks and the static nonlinear blocks, and between the static nonlinear blocks and the back LTI blocks, can be incorporated in the static nonlinear blocks. This transforms the SISO static nonlinearities of each branch into one MIMO static nonlinearity, as is shown in Figure \ref{fig:parallelWienerHammersteinModel}.
	
	The number of degenerations $n_{deg}$ present in the model is quantified by:
	\begin{align}
		n_{deg} = 2n_{br}^2,
	\end{align}
	where $n_{br}$ is the number of parallel branches in the model. Each full rank linear transform (which includes also the gain exchanges) introduces $n_{br}^2$ degenerations in the model.
	
	The model is intended to describe the system, and has to overcome all of the identifiability issues. The gain and delay exchanges can be accounted for by using an appropriate normalization and parameterization \cite{SchoukensM2014}. The full rank linear transformations, on the other hand, require some attention.  As a consequence of the full rank linear transformations, the model with one SISO static nonlinearity for each branch is transformed into a model with one MIMO static nonlinearity that describes the nonlinear behavior of the system. This modified model structure is shown in Figure \ref{fig:parallelWienerHammersteinModel}. In a later step, the MIMO static nonlinearity can be decoupled again to yield one SISO static nonlinearity for each branch \cite{Tiels2013,SchoukensM2014a} hereby eliminating cross-coupling between branches. The identified LTI blocks will be a linear combination of the amplitude scaled and/or delayed versions of the exact but unknown LTI blocks that are present in the system.
	
	\begin{figure}
		\centering
			\includegraphics[width=0.95\columnwidth]{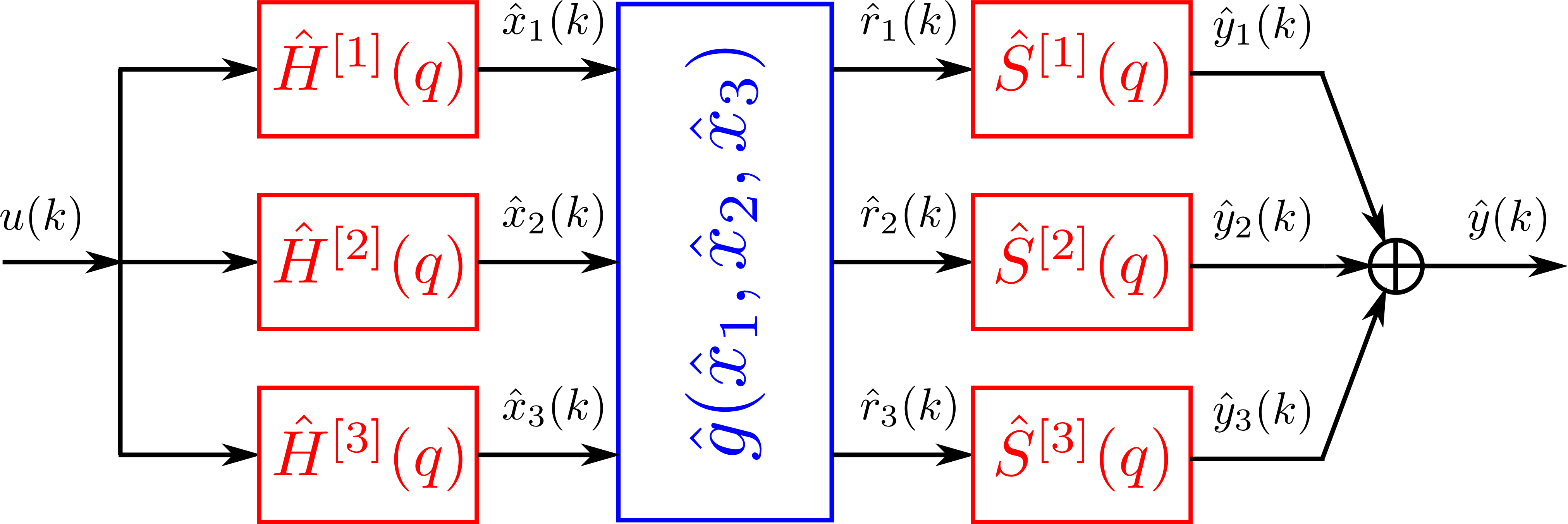}
		\caption{A 3-branch parallel Wiener\-/Hammerstein model. A MIMO static nonlinear block $\hat{g}$ is sandwiched in between the LTI blocks $\hat{H}^{[i]}(q)$ and $\hat{S}^{[i]}(q)$.}
		\label{fig:parallelWienerHammersteinModel}
	\end{figure}	
	
\section{The best linear approximation} \label{sec:bla}

The best linear approximation (BLA) approximates the output of a nonlinear system with the response of an LTI model in mean square sense. The BLA depends on the system, on the probability density function of the chosen input signal, and on the input power spectrum (rms value and coloring).

\begin{mydef} \textbf{Best linear approximation (BLA).}
	The BLA of a nonlinear system is the linear system $G_{bla}(q)$ that minimizes the mean square error \cite{Enqvist2005a,Enqvist2010,Pintelon2012}:
	\begin{align}
		G_{bla}(q) &=  \underset{G(q)}{\argmin} \; E\left\{ \left( \tilde{y}(k) - G(q)\tilde{u}(k) \right)^2 \right\}, \nonumber \\
		\tilde{u}(k) &= u(k) - E\left\{u(k)\right\}, \\
		\tilde{y}(k) &= y(k) - E\left\{y(k)\right\}, \nonumber
	\end{align}
	where the expectation $E\left\{.\right\}$ is taken with respect to the random realization of $u(k)$.
\end{mydef}

The BLA of a parallel Wiener\-/Hammerstein system obtained under Assumption \ref{ass:input} is a simple function of the dynamic blocks that are present in the parallel Wiener\-/Hammerstein system under test \cite{Bussgang1952,Enqvist2010,Pintelon2012}. The static nonlinearity $f^{[i]}(x_i)$ of branch $i$ in a parallel Wiener\-/Hammerstein system can be approximated by a constant gain $\alpha^{[i]}$ \cite{Bussgang1952,Enqvist2010}. This results in Theorem \ref{theo:BLA}.
		
		\begin{thm} \label{theo:BLA}
			The BLA of a parallel Wiener\-/Hammerstein system (Assumption \ref{ass:system}) excited by inputs satisfying Assumption \ref{ass:input} is given by:
			\begin{align}
				G_{bla}(q) = \sum_{i=1}^{n_{br}} \alpha^{[i]} H^{[i]}(q) S^{[i]}(q), \label{eq:BLA}
			\end{align}	
			where $\alpha^{[i]}$ depends on the subsystems in the $i$th branch, the power spectrum of the input signal $u$, and hence as well on the variance of the input signal $u$.
		\end{thm}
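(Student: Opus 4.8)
The plan is to establish the result in two stages: first reduce the BLA of the full parallel system to a sum of the BLAs of the individual Wiener--Hammerstein branches, and then compute the BLA of a single branch. Because the BLA is defined as the minimizer of a mean-square error, and because the output $y_0(k) = \sum_{i=1}^{n_{br}} y_i(k)$ is a linear superposition of branch outputs, linearity of the expectation together with the fact that the same input $\tilde u(k)$ drives every branch lets me argue that the optimal linear approximation of the sum equals the sum of the optimal linear approximations. More precisely, I would appeal to the frequency-domain characterization of the BLA, $G_{bla}(e^{j\omega}) = S_{YU}(\omega)/S_{U}(\omega)$, where $S_{YU}$ is the cross-power spectrum between output and input and $S_U$ is the input auto-power spectrum; since $S_{YU} = \sum_{i} S_{Y_i U}$ by linearity of the cross-correlation in the output, the branch decomposition falls out immediately.

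The second and substantive stage is to compute the BLA of a single Wiener--Hammerstein branch, $x_i = H^{[i]}(q)u$, $r_i = f^{[i]}(x_i)$, $y_i = S^{[i]}(q)r_i$. First I would note that since $S^{[i]}(q)$ is an LTI block acting at the output, it factors out of the BLA: the BLA of $S^{[i]}(q)f^{[i]}(H^{[i]}(q)u)$ is $S^{[i]}(q)$ times the BLA of $f^{[i]}(H^{[i]}(q)u)$. This reduces the problem to the BLA of a Wiener system $f^{[i]}(H^{[i]}(q)u)$. Here is where the Gaussian input assumption does the real work: under Assumption \ref{ass:input} the intermediate signal $x_i(k) = H^{[i]}(q)u(k)$ is itself (asymptotically) Gaussian, being a linear filtering of an asymptotically normally distributed excitation. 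Bussgang's theorem \cite{Bussgang1952} then guarantees that for a static nonlinearity driven by a Gaussian signal, the cross-correlation between input and output is proportional to the input autocorrelation, with a constant of proportionality $\alpha^{[i]} = E\{f^{[i]\prime}(x_i)\}$ (equivalently $\alpha^{[i]} = E\{x_i f^{[i]}(x_i)\}/E\{x_i^2\}$). Consequently the BLA of the Wiener block $f^{[i]}(H^{[i]}(q)u)$ is exactly $\alpha^{[i]}H^{[i]}(q)$, and multiplying by the output filter gives the branch BLA $\alpha^{[i]}H^{[i]}(q)S^{[i]}(q)$.

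Summing over branches then yields $G_{bla}(q) = \sum_{i=1}^{n_{br}}\alpha^{[i]}H^{[i]}(q)S^{[i]}(q)$, which is the claimed equation (\ref{eq:BLA}). The dependence of $\alpha^{[i]}$ on the input power spectrum and hence on the input variance is transparent from the Bussgang constant: $\alpha^{[i]}$ is an expectation over the distribution of $x_i$, whose variance equals $\tfrac{1}{2\pi}\int |H^{[i]}(e^{j\omega})|^2 S_U(\omega)\,d\omega$, so changing the rms value or coloring of $u$ changes the operating regime of $f^{[i]}$ and therefore $\alpha^{[i]}$.

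I expect the main obstacle to be justifying the applicability of Bussgang's theorem under the \emph{asymptotic} normality permitted by the Riemann equivalence class, rather than exact Gaussianity. For genuine Gaussian noise input the argument is textbook, but for random multisines one only has asymptotic normality of $x_i$ as $N\to\infty$, so I would need to invoke the fact that the Riemann-equivalence definition is precisely engineered to make the relevant spectral quantities agree up to $O(N^{-1})$ terms, and cite the standard results \cite{Enqvist2010,Pintelon2012} establishing that the BLA is invariant across the equivalence class. A secondary technical point is verifying that the moments defining $\alpha^{[i]}$ are finite, which follows from the finite-output assumption on the basis functions $f_j^{[i]}$ together with the finite variance of $x_i$ guaranteed by stability of $H^{[i]}(q)$.
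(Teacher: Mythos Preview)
Your proposal is correct and follows the same two-step structure as the paper's proof: establish additivity of the BLA across parallel branches, then compute the single-branch BLA as $\alpha^{[i]}H^{[i]}(q)S^{[i]}(q)$. The paper's own proof is terser, simply citing Section~3.4.3.5 of \cite{Pintelon2012} for the single-branch result rather than spelling out the Bussgang argument, so your version is more self-contained but not methodologically different.
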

		\begin{proof}
			It is shown in Section 3.4.3.5 of \cite{Pintelon2012} that the BLA of the i-th branch of a parallel Wiener\-/Hammerstein system is given by $\alpha^{[i]} H^{[i]}(q) S^{[i]}(q)$. Since the output of a parallel Wiener\-/Hammerstein system is given by the sum of the different Wiener\-/Hammerstein branches, the BLA of a parallel Wiener\-/Hammerstein system is given by eq. (\ref{eq:BLA}).
		\end{proof}

		\begin{assumption} \label{ass:evenNl}
			The BLA $\alpha^{[i]} H^{[i]}(q) S^{[i]}(q)$ of every branch $i$ has a non-zero gain $\alpha^{[i]}$. 
		\end{assumption}
		It can happen that the BLA of one of the branches of the parallel Wiener-Hammerstein system is equal to zero, or in other words $\alpha^{[i]} = 0$. This is the case when the nonlinear function $f^{[i]}(x_{i})$ is even around the expected value of $x_{i}$.	In this case, a BLA of a reduced order is obtained that does not contain the dynamics of branch $i$. This assumption excludes that the static nonlinearity $f^{[i]}(x_i)$ of branch $i$ is symmetric with respect to the DC setpoint of the signal $x_i$. Different DC setpoints can be tried to avoid a zero gain $\alpha^{[i]}$.
		
		\begin{assumption} \label{ass:polezero}
			The combined dynamics $H^{[i]}(q)S^{[i]}(q)$ of the $i$th branch do not contain any pole\-/zero cancellation for any branch $i$. 
		\end{assumption}
		A pole that appears in the front LTI block of a branch, can be canceled by a zero that is present in the back LTI block of the same branch. They will not be detected during the parametrization of the BLA. This assumption is quite common for Wiener-Hammerstein identification algorithms, see for example the two identification algorithms presented in \cite{Sjoberg2012a}. However, there exist different Wiener-Hammerstein and $S_M$ identification algorithms that do not need this assumption \cite{Baumgartner1975,Wysocki1976,Billings1979,SchoukensM2014}. This is possible using a more advanced correlation analysis \cite{Billings1979,SchoukensM2014}, or a more restrictive class of input signals \cite{Baumgartner1975,Wysocki1976}.
		
		An important observation with respect to eq. (\ref{eq:BLA}) is that the input dependent gain $\alpha^{[i]}$ only appears in the numerator: 
		\begin{align} \label{eq:BlaCommonDen}
			G_{bla}(q) = \frac{\sum_{i=1}^{n_{br}}\alpha^{[i]}B_{hs}^{[i]}(q)\prod_{j=1,j\neq i}^{n_{br}}A_{hs}^{[j]}(q)} {\prod_{i=1}^{n_{br}}A_{hs}^{[i]}(q)}, 
		\end{align}
		where
		\begin{align}
%			H^{[i]}(q) &= \frac{B_{h}^{[i]}(q)}{A_{h}^{[i]}(q)}, \\
%			S^{[i]}(q) &= \frac{B_{s}^{[i]}(q)}{A_{s}^{[i]}(q)}, \\
			B_{hs}^{[i]}(q) &= B_{h}^{[i]}(q)B_{s}^{[i]}(q), \\
			A_{hs}^{[i]}(q) &= A_{h}^{[i]}(q)A_{s}^{[i]}(q).
		\end{align}
		This means that under Assumptions \ref{ass:evenNl} and \ref{ass:polezero}, the poles of the identified BLA are also the poles of the LTI blocks that are present in the system. The zeros of the BLA of a parallel Wiener\-/Hammerstein system may change when the amplitude, power spectrum, or the offset (DC value) of the input signal changes.

\section{Estimating the parameters of a parallel Wiener\-/Hammerstein system} \label{sec:estimation}
	The approach presented in \cite{SchoukensM2011,SchoukensM2012b} to decompose the dynamics over the different branches of a parallel Hammerstein and parallel Wiener systems is combined with an initialization approach that splits the dynamics into the front and back LTI blocks of a Wiener\-/Hammerstein system as presented in \cite{Sjoberg2012}. 
	
	Other approaches to split the dynamics of a Wiener\-/Hammerstein system exist in the literature \cite{Billings1978,Westwick2012,SchoukensM2014}, but are more complex to implement and seem to be more sensitive to noisy data. An initial version of this method was presented in \cite{SchoukensM2013b}. 

	The proposed approach starts with an estimation of the BLA of the considered system for different operating conditions (Section \ref{sec:measBla}). The different operating conditions are obtained using input signals with different power spectra. This includes the use of different magnitudes, different offsets, or different coloring of the power spectra. A consistent estimate of the overall dynamics that are present in the nonlinear parallel Wiener\-/Hammerstein system results. 
	
	Next, the measured BLAs are parameterized (Section \ref{sec:paramBla}) using a different LTI model for each operating condition. A common denominator model is used for all operating conditions simultaneously. This is indeed possible, as Theorem \ref{theo:BLA} assures that the poles of the different measured BLAs are the same. 
	
	Starting from the parameterized BLAs, a decomposition of the overall dynamics at the different operating conditions is calculated in Section \ref{sec:decomposing}. It uses the singular value decomposition (SVD) of a matrix constructed using the numerator coefficients of the parameterized BLAs obtained at the different operating conditions. This step results in an estimate of the number of branches that is present in the parallel Wiener\-/Hammerstein system. The number is obtained based on the estimated rank of the decomposed matrix. The dynamics $H^{[i]}(q)S^{[i]}(q)$ that are present in each branch are estimated next, up to the identifiability issues presented in Section \ref{sec:identifiability}. 
	
	Finally, a modified version of the algorithm proposed in \cite{Sjoberg2012} is proposed in Section \ref{sec:partitioning} to partition the dynamics $H^{[i]}(q)S^{[i]}(q)$ over the different blocks of the parallel Wiener\-/Hammerstein model, and to estimate the static nonlinearity that is present in the model.
	
	\subsection{Estimating a BLA for different operating conditions} \label{sec:measBla}
		The nonparametric frequency response function (FRF) estimate of the BLA at operating condition $i_r$ is labeled $\hat{G}_{bla}^{[i_r]}(j\omega_k)$. It is obtained by the robust BLA estimation method proposed in \cite{Pintelon2012, Schoukens2012}. Both the FRF and the sample variance $\hat{\sigma}^2_{\hat{G}_{bla}^{[i_r]}}(j\omega_k)$ of the BLA are obtained at each excited frequency. The latter is used to determine the weighting factor used during the parameterization of the BLA. This process is explained in more detail in \cite{Pintelon2012, Schoukens2012, SchoukensM2013b}.
			
	\subsection{Parameterizing the BLAs} \label{sec:paramBla}
		The measured nonparametric BLAs $\hat{G}_{bla}^{[i_r]}(j\omega_k)$ at the $R$ different operating conditions are parameterized simultaneously using a common denominator model. To perform the estimation, a weighted total least squares initialization is used \cite{Pintelon1998}. It is followed by a sample maximum likelihood estimation \cite{Pintelon2011b,Pintelon2012}. The frequency dependent estimation weights for the FRF are inversely proportional to the estimated sample variances $\hat{\sigma}^2_{\hat{G}_{bla}^{[i_r]}}(j\omega_k)$ of the BLAs for the $R$ different operating conditions of the system. This results in a parameterized version of the different BLAs $\hat{G}_{bla}\left(q,\hat{\boldsymbol{\theta}}_{bla}\right)$:
		\begin{align}
			\hat{G}_{bla}\left(q,\hat{\boldsymbol{\theta^{[i_r]}}}_{bla}\right) &= \frac{\hat{d}_{0}^{[i_r]} + \hat{d}_{1}^{[i_r]}q^{-1} + \ldots + \hat{d}_{n_{d}}^{[i_r]}q^{-n_{d}}}{\hat{c}_{0} + \hat{c}_{1}q^{-1} + \ldots + \hat{c}_{n_{c}}q^{-n_{c}}}, \label{eq:parBla}
		\end{align}
		where the denominator coefficients are shared by the BLAs for the different operating conditions $i_r$, while the numerator coefficients vary with the input operating condition $i_r$. $\hat{\boldsymbol{\theta}}_{bla}$ contains all the denominator coefficients $\hat{c}_i$, and all the numerator coefficients $\hat{d}_i^{[i_r]}$ of the BLAs for the different operating conditions $i_r$. The model order of the parametrized BLAs can be selected using standard model structure selection techniques \cite{Ljung1999}.		
		
		\begin{remark} 
			Assumption \ref{ass:system} considers discrete time linear time-invariant systems. The proposed method works equally well for continuous time systems, estimating a continuous time model. Instead of parameterizing the LTI blocks using a rational function of finite order in the backwards shift operator $q^{-1}$, a continuous time $s$-domain parameterization can be used.
		\end{remark}			

	\subsection{Decomposing the BLAs} \label{sec:decomposing}
		The overall frequency dynamics need to be distributed over the different LTI systems that are present in the branches at the front and the back of the parallel Wiener\-/Hammerstein model. This section presents a decomposition of the numerator coefficients of the overall dynamics of the BLA into a set of basis vectors that describe the space spanned by the numerator vectors. These basis vectors are an estimate of the dynamics of each parallel branch.
		
		A difference with the previous approaches in \cite{SchoukensM2011,SchoukensM2012b} is that the numerators of the estimated BLAs are decomposed, rather than the nonparametric BLA transfer functions. This is possible and adequate since a common denominator model is used for the parameterized BLAs. This new method avoids a frequency sampling step of the parametric BLAs, and a re-parameterization of the decomposed BLA dynamics. The process is explained next.
			
		First, a matrix $\hat{\boldsymbol{D}}$ is constructed containing the stacked estimated numerator coefficients of the BLAs at the different operating conditions:
		\begin{align}
			\hat{\boldsymbol{D}} &= \left[ \begin{array}{cccc}
											\hat{d}_{0}^{[1]} & \hat{d}_{1}^{[1]} & \ldots & \hat{d}_{n_{d}}^{[1]} \\
											\hat{d}_{0}^{[2]} & \hat{d}_{1}^{[2]} & \ldots & \hat{d}_{n_{d}}^{[2]} \\
											\vdots						& \vdots						& \ddots & \vdots								 \\
											\hat{d}_{0}^{[R]} & \hat{d}_{1}^{[R]} & \ldots & \hat{d}_{n_{d}}^{[R]} \\
					 					\end{array} \right]. \label{eq:D}
		\end{align}
		The underlying distortion free version of this matrix, $\boldsymbol{D}$, is of low rank. The maximum rank of the matrix $\boldsymbol{D}$, for $R, \; n_d > n_{br}$ is $n_{br}$. Using eq. \eqref{eq:BlaCommonDen}, one can write the $\boldsymbol{D}$ matrix as:
		\begin{align}
			\boldsymbol{D} &= \boldsymbol{A}\boldsymbol{B}, \\
			\boldsymbol{A} &= \left[ \begin{array}{cccc}
											\alpha_{1}^{[1]} 			& \alpha_{1}^{[2]} & \ldots & \alpha_{1}^{[n_{br}]} \\
											\alpha_{2}^{[1]} 			& \alpha_{2}^{[2]} & \ldots & \alpha_{2}^{[n_{br}]} \\
											\vdots					 			& \vdots					 & \ddots & \vdots								\\
											\alpha_{R}^{[1]} 			& \alpha_{R}^{[2]} & \ldots & \alpha_{R}^{[n_{br}]} \\
					 					\end{array} \right], \\
			\boldsymbol{B} &= \left[ \begin{array}{cccc}
											b_{0}^{[1]} 		 & b_{1}^{[1]} 			& \ldots & b_{n_d}^{[1]} 			\\
											b_{0}^{[2]} 		 & b_{1}^{[2]} 			& \ldots & b_{n_d}^{[2]} 			\\
											\vdots					 & \vdots						& \ddots & \vdots							\\
											b_{0}^{[n_{br}]} & b_{1}^{[n_{br}]} & \ldots & b_{n_d}^{[n_{br}]} \\
					 					\end{array} \right],
		\end{align}
		where $\alpha_{i}^{[j]}$ is the gain of the $j$-th branch of the $i$-th BLA, and $b_{i}^{[j]}$ is the $i$-th degree coefficient of $B_{hs}^{[i]}(q)\prod_{j=1,j\neq i}^{n_{br}}A_{hs}^{[j]}(q)$ (see eq. \eqref{eq:BlaCommonDen}). 
		
		The matrix $\boldsymbol{B}$ depends only on the dynamics that are present in the different branches of the system. The matrix $\boldsymbol{A}$ depends both on the system and on the input signal.
		
		\begin{assumption} \label{ass:D}
			The rank of the BLA numerator matrix $\boldsymbol{D}$ is equal to the number of parallel branches in the system.
		\end{assumption}
		The proposed identification method is based on a decomposition of the BLA over the different branches of the parallel Wiener\-/Hammerstein model. For the method to work, this decomposition should be able to separate the dynamics of each branch. This implies that the numerator of the combined dynamics of one branch ($H^{[i]}(q)S^{[i]}(q)$) of one branch is linearly independent from the numerators of the combined dynamics of the other branches of the parallel Wiener\-/Hammerstein system. This assumption also excludes the particular case of a parallel Wiener-Hammerstein system that consists of two LTI or two static nonlinear blocks placed in parallel, or for example a parallel Wiener-Hammerstein system where $H^{[1]}(q) = S^{[2]}(q)$ and $H^{[2]}(q) = S^{[1]}(q)$.
		
		The SVD of $\hat{\boldsymbol{D}}$ yields an orthonormal basis for the space spanned by the $\hat{\boldsymbol{D}}$-matrix:
		\begin{align}
			\hat{\boldsymbol{D}} &= \boldsymbol{U}_{bla} \boldsymbol{\Sigma}_{bla} \boldsymbol{V}_{bla}^T,
		\end{align}
		where superscript $.^T$ denotes the transpose of a matrix, $\boldsymbol{V}_{bla}$ contains the right singular vectors which act as an orthonormal basis for the right hand side space $\hat{\boldsymbol{D}}$-matrix, $\boldsymbol{\Sigma}_{bla}$ is a diagonal matrix containing the singular values, and $\boldsymbol{U}_{bla}$ contains the basis for the left hand side space.
		
		The column vectors in $\boldsymbol{V}_{bla}$ provide an estimate of the numerator coefficients for each branch:
		\begin{align}
			\hat{G}_{i_{br}}(q) &= \frac{\hat{\delta}_{0}^{[i_{br}]} + \hat{\delta}_{1}^{[i_{br}]}q^{-1} + \ldots + \hat{\delta}_{n_{d}}^{[i_{br}]}q^{-n_{d}}}{\hat{c}_{0} + \hat{c}_{1}q^{-1} + \ldots + \hat{c}_{n_{c}}q^{-n_{c}}},
		\end{align}
		where $\hat{\delta}_{j}^{[i_{br}]}$ is the element of the $j$-th row and $i_{br}$-th column of the matrix $\boldsymbol{V}_{bla}$. 
		
		The rank of the matrix $\boldsymbol{D}$ corresponds to the number of parallel branches $n_{br}$ that is necessary to describe the system. This rank can be obtained by applying a rank estimation algorithm on the singular value matrix $\boldsymbol{\Sigma}_{bla}$ \cite{Rolain1997}, that is obtained from the noisy matrix $\hat{\boldsymbol{D}}$. To do so, the column covariance matrix $\boldsymbol{C}_{D}$ of $\hat{\boldsymbol{D}}$ is needed. This column covariance matrix is obtained from the covariance of the parameters estimated in the BLA parametrization step. The whitened matrix ${\boldsymbol{D}}_{white}$ is given by:
		\begin{align}
			\boldsymbol{D}_{white} = \hat{\boldsymbol{D}}\boldsymbol{C}_{D}^{-1/2}
		\end{align}
		The estimated rank of the noisy matrix $\hat{\boldsymbol{D}}$ corresponds to the number of singular values of $\boldsymbol{D}_{white}$ that are higher than 1 \cite{Rolain1997}. The reader is referred to \cite{Rolain1997} for more details about the rank estimation method and its hypotheses.
		
	\subsection{Partitioning the poles and zeros} \label{sec:partitioning}
		This section presents an algorithm to partition the dynamics of each branch $\hat{G}_{i_{br}}(q)$ over the front and the back dynamics. The basic idea is pretty simple: try every partition of poles and zeros in the different LTI blocks, estimate the static nonlinear block with a fixed set of nonlinear basis functions, and finally select the model that minimizes the simulation error.
	
		\subsubsection{Generating all pole and zero partitions} \label{sec:pattern}
		
			\begin{assumption} \label{ass:poles}
				The front dynamic block of branch $i$ ($i = 1 \ldots n_{br}$) and the back dynamic block of branch $j$ ($j = 1 \ldots n_{br}$) have no common poles, wherever $i \neq j$.
			\end{assumption}
			This Assumption allows one to assign each estimated BLA pole to either the front or the back dynamics. A pole that is present in two different branches only appears once in the BLA. This does not pose a problem, if that pole is originating from either the front or the back LTI blocks due to the common denominator approach. However, this creates a problem when that pole is present once in the front LTI block of one branch and once in the back LTI block of another branch since it can only be assigned to either the front or the back LTI blocks.
		
			A first step in the algorithm is to generate all possible pole and zero partitions for the different LTI blocks. The poles and zeros to be distributed are the ones obtained from the branch dynamic estimated before. Let $\hat{G}_{i_{br}}(q)$ be the dynamics of branch $i_{br}$ of the parallel Wiener\-/Hammerstein model. Under Assumption \ref{ass:poles}, every pole and zero of $\hat{G}_{i_{br}}(q)$ has to be assigned to either the front or the back LTI block of the $i_{br}$-th branch. Some of the computational aspects of this approach are discussed in Section \ref{sec:computationAspects}. Complex pole and/or zero pairs are allocated pairwise to impose real coefficients in the transfer function model. The common denominator approach is preserved during the partitioning procedure. The construction of the front and the back dynamic systems of the branch $i_{br}$ is then:	
			\begin{align}
				\hat{G}_{i_{br}}(q) &= \gamma_{i_{br}} \frac{\hat{B}^{\{z_j^{i_{br}}\}}_{h}(q)}{\hat{A}^{\{p_i\}}_{h}(q)} \frac{\hat{B}^{\{z_j^{i_{br}}\}}_{s}(q)}{\hat{A}^{\{p_i\}}_{s}(q)} \label{eq:dynSplit}
			\end{align}
			for all possible pole partitions $\{p_i\}$, and for all possible zero partitions $\{z_{j}^{i_{br}}\}$ of branch $i_{br}$. In eq. (\ref{eq:dynSplit}) subscript $h$ denotes the front dynamic block, and subscript $s$ denotes the back dynamic block. $\gamma$ denotes a gain factor that depends on the particular pole and zero partition.
			
		\subsubsection{Estimating the static nonlinearity} \label{sec:nlEst}
			The static nonlinearity is estimated for every possible pole-zero partition $\{p_i,z_{j}^{i_{br}}\}$ of every branch $i_{br}$.
			
			This estimation is linear in the parameters when the nonlinearity is expressed as a linear combination of nonlinear basis functions (such as multivariate polynomial basis functions, piecewise linear basis functions, or radial basis function networks with a fixed width and a fixed center): 
			\begin{align}
				\hat{r}_i(k) &= \sum_{i_w=1}^{n_w} \hat{w}_{i_w}^{[i]} g_{i_w}(\hat{x}_1(k), \ldots, \hat{x}_{n_{br}}(k)),
			\end{align}
			where $\hat{w}_{i_w}^{[i]}$ is the coefficient belonging to the $i_w$-th basis function $g_{i_w}$ for the $i$-th output $\hat{r}_i(k)$ of the MIMO static nonlinearity, $\hat{x}_{j}(k)$ is the $j$-th input of the MIMO static nonlinearity, and $n_w$ is the number of nonlinear basis functions that is selected by the user.
			
			First, the intermediate signals $\boldsymbol{\hat{x}}^{\{p_i,\boldsymbol{z_{j}}\}}$ for pole partition $\{p_i\}$ and every possible zero partition $\{z_{j}^{i_{br}}\}$ of every branch $i_{br}$ are obtained:
			\begin{align}
				\hat{x}^{\{p_i,z_{j}^{i_{br}}\}}_{i_{br}}(k) &= \frac{\hat{B}^{\{z_{j}^{i_{br}}\}}_{h}(q)}{\hat{A}^{\{p_i\}}_{h}(q)} u(k),	\\
				\boldsymbol{\hat{x}}^{\{p_i,\boldsymbol{z_{j}}\}}(k) &= \left[ \begin{array}{ccc} \hat{x}^{\{p_i,z_{j}^{1}\}}_1(k) & \ldots & \hat{x}^{\{p_i,z_{j}^{n_{br}}\}}_{n_{br}}(k) \end{array} \right]^T, \nonumber \\
				\boldsymbol{z_{j}} &= \left[ \begin{array}{cccc} z_{j}^{1} & z_{j}^{2} & \ldots & z_{j}^{n_{br}} \end{array} \right].
			\end{align}
			Next, the MIMO nonlinearity is estimated from the intermediate signals $\boldsymbol{\hat{x}}^{\{p_i,\boldsymbol{z_{j}}\}}$ generated through the output filters of all the branches $i_{br}$ to the measured output. A regressor matrix $\boldsymbol{K}^{\{p_i,\boldsymbol{z_{j}}\}}$ is constructed using a fixed, user selected set of nonlinear basis functions $g_1$ to $g_{n_w}$. For one partition of poles and zeros $\{p_i,\boldsymbol{z_{j}}\}$ one obtains:			
		\begin{align}
			& \boldsymbol{K}^{\{i_{br},p_i,\boldsymbol{z_{j}}\}} = \nonumber \\
			 	& \left[  \begin{array}{ccc} 		
				\frac{\hat{B}^{\{z_j^{i_{br}}\}}_{s}(q)}{\hat{A}^{\{p_i\}}_{s}(q)} g_1(\boldsymbol{\hat{x}}^{\{p_i,\boldsymbol{z_{j}}\}}(1)) & \ldots & \frac{\hat{B}^{\{z_j^{i_{br}}\}}_{s}(q)}{\hat{A}^{\{p_i\}}_{s}(q)} g_{n_w}(\boldsymbol{\hat{x}}^{\{p_i,\boldsymbol{z_{j}}\}}(1))	\\ 
				\vdots & \ddots & \vdots 	\\
				\frac{\hat{B}^{\{z_j^{i_{br}}\}}_{s}(q)}{\hat{A}^{\{p_i\}}_{s}(q)} g_1(\boldsymbol{\hat{x}}^{\{p_i,\boldsymbol{z_{j}}\}}(N)) & \ldots &\frac{\hat{B}^{\{z_j^{i_{br}}\}}_{s}(q)}{\hat{A}^{\{p_i\}}_{s}(q)} g_{n_w}(\boldsymbol{\hat{x}}^{\{p_i,\boldsymbol{z_{j}}\}}(N))
			 	\end{array} \right], \nonumber
		\end{align}
		\begin{align} \label{eq:SnlEstReg}
			\boldsymbol{K}^{\{p_i,\boldsymbol{z_{j}}\}} 	&= \left[  \begin{array}{ccc} \boldsymbol{K}^{\{1,p_i,\boldsymbol{z_{j}}\}} & \ldots & \boldsymbol{K}^{\{n_{br},p_i,\boldsymbol{z_{j}}\}} \end{array} \right],
		\end{align}
		where $N$ is the total number of data points used.
		
		The coefficients of the nonlinear basis functions for the partition $\{p_i,\boldsymbol{z_{j}}\}$ are obtained using a linear least squares estimation:
		\begin{align}
			\boldsymbol{\hat{w}}^{\{p_i,\boldsymbol{z_{j}}\}}	&= \left( {\boldsymbol{K}^{\{p_i,\boldsymbol{z_{j}}\}}}^T \boldsymbol{K}^{\{p_i,\boldsymbol{z_{j}}\}} \right)^{-1} \boldsymbol{K}^{\{p_i,\boldsymbol{z_{j}}\}} \boldsymbol{y}, \label{eq:nlEst} \\
			\boldsymbol{y} &= \left[ y(1) \: y(2) \: \ldots \: y(N) \right] ^T
		\end{align}
		In practice, the solution is obtained using a QR decomposition. To improve the numerical conditioning of the matrix, the columns of $\boldsymbol{K}^{\{p_i,\boldsymbol{z_{j}}\}}$ are normalized. Each column is therefore divided by its $l^2$-norm.		
			
		\subsubsection{Pole-zero pattern selection} \label{sec:modelSelection}
			The simulation error $\boldsymbol{\hat{e}}^{\{p_i,\boldsymbol{z_{j}}\}}$ present between the modeled output and the measured output is computed. The partition that results in the lowest root mean square error is selected. From this point on, the front and the back LTI blocks, $\hat{H}^{[i]}(q,\boldsymbol{\theta})$ and $\hat{S}^{[i]}(q,\boldsymbol{\theta})$, and the coefficients of the static nonlinearity $\hat{w}_{i_w}^{[i]}$ are all estimated.
		
		The modeled output $\hat{y}(k,\boldsymbol{\theta})$ is obtained as follows:
		\begin{align}
			\hat{x}_i(k,\boldsymbol{\theta}) &= \hat{H}^{[i]}(q,\boldsymbol{\theta}) u(k), \\
			\hat{r}_i(k,\boldsymbol{\theta}) &= \sum_{i_w=1}^{n_w} \hat{w}_{i_w}^{[i]} g_{i_w}(\hat{x}_1(k,\boldsymbol{\theta}), \ldots, \hat{x}_{n_{br}}(k,\boldsymbol{\theta})), \\
			\hat{y}(k,\boldsymbol{\theta}) &= \sum_{i=1}^{n_{br}} \hat{S}^{[i]}(q,\boldsymbol{\theta}) \hat{r}_i(k,\boldsymbol{\theta}), \label{eq:ParWhModelOuptut}
		\end{align}
		where the signals are as in Figure \ref{fig:parallelWienerHammersteinModel}. The parameters of the model are stored in the parameter vector $\boldsymbol{\theta}$.
		
	\subsection{Improving the estimated nonlinearity} \label{sec:reEstimateNl}
		The number of parameters used by a MIMO static nonlinear model that is linear in the parameters tends to grow very fast. It grows combinatorially in the case of a multivariate polynomial for an increasing number of inputs and outputs, and for an increasing model complexity (e.g. the degree of the multivariate polynomial). Static nonlinear models that are nonlinear in the parameters, such as neural networks, can be less sensitive to this problem if properly tuned. For a standard feed-forward neural network with one hidden layer and a linear output layer, the number of parameters grows linearly with the number of input and outputs, and linearly with the complexity (number of neurons) of the neural network.
		
		An initial estimate of the nonlinear behavior and the LTI blocks that are present in the parallel Wiener\-/Hammerstein model can be obtained using one set of nonlinear basis functions resulting in a model that is linear in the parameters, e.g. using multivariate polynomials. In a second step, the static nonlinearity can be re-estimated using another MIMO static nonlinear model, e.g. using a neural network, to increase the model flexibility without increasing the number of parameters too much. The decision whether or not to perform this refinement step is left to the user. This step is easily performed as the intermediate signals $\hat{x}_i$ and $\hat{r}_i$, defined in Figure \ref{fig:parallelWienerHammersteinModel}, can be obtained using the model estimated in Section \ref{sec:modelSelection}. The initial guess of the parameters of this second parameterization can then be further refined in a final complete optimization step, as described in Section \ref{sec:optimization}.
		
\section{Persistence of excitation} \label{sec:persistence}
	\begin{assumption} \label{ass:persis}
		The input signal $u(k)$ is assumed to be persistently exciting the system. %The excitation signals fulfill the requirements of the identification algorithm.		
	\end{assumption}
	
	The assumption that the excitation is persistent is a very common assumption in system identification. This section discusses what persistence of excitation means for the proposed identification procedure.
	
	The first step in the identification algorithm is to identify the parametric BLA of the nonlinear parallel Wiener-Hammerstein system. It is important that the BLA identifies the dynamics that are present in the system correctly. Therefore, the number of excited frequencies in the input signal $u(k)$ needs be equal or higher than $\frac{n_d + n_c +1}{2}$.
	
	Also the MIMO static nonlinearity needs to be estimated. For this identification to work, the matrix $\boldsymbol{K}^{\{p_i,\boldsymbol{z_{j}}\}}$ in eq. (\ref{eq:SnlEstReg}) needs to be of full rank. Put in other words, the nonlinear basis functions $g_{i_w}(\hat{x}_1(k), \ldots, \hat{x}_{n_{br}}(k))$ need to be linearly independent over the domain of the intermediate signals $\hat{x}_1(k), \ldots, \hat{x}_{n_{br}}(k)$. A consequence is that the range of amplitudes present in $\hat{x}_1(k), \ldots, \hat{x}_{n_{br}}(k)$ needs to be sufficiently large.
	
	Furthermore, Assumption \ref{ass:D} does not only have consequences for the system. It also determines the choice of the different setpoints of the input signals. The setpoints are chosen to ensure that the rank of the matrix $\boldsymbol{D}$ is equal to $n_{br}$.
	
\section{Consistency of the initial estimates} \label{sec:consistency}
	This section shows the consistency of the proposed estimator when a linear-in-the-parameters nonlinearity model is used to describe the MIMO static nonlinearity.
	
	\begin{assumption} \label{ass:modelSet}
		The data is generated by a parallel Wiener-Hammerstein system that lies in the model set. 
	\end{assumption}	
	
	\begin{thm} \label{theo:BlaParWH1}
		The parameterized BLA $\hat{G}_{bla}^{[i_r]}\left(q,\hat{\boldsymbol{\theta}}_{bla}\right)$ in eq. (\ref{eq:parBla}) is a consistent (convergence with probability 1) estimate of eq. (\ref{eq:BLA}) when the number of samples $N$ tends to infinity, and the number of input signal realizations $M \geq 4$ under Assumptions \ref{ass:input}, \ref{ass:noise}, \ref{ass:modelSet}.
	\end{thm}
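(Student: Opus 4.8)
The plan is to prove consistency in two stages matching the two-step estimator: the robust nonparametric FRF estimate of the BLA, followed by the parametric common-denominator sample maximum likelihood fit. Note that the convergence is in the number of samples $N$ (equivalently, in the number of excited frequency lines, which grows with $N$ for a Riemann-equivalent excitation), while the number of realizations $M$ stays fixed at $\geq 4$. I would therefore reduce the claim to: (i) at every excited bin $\omega_k$ the averaged FRF estimate $\hat{G}_{bla}^{[i_r]}(j\omega_k)$ equals the true BLA frequency response of Theorem~\ref{theo:BLA} plus a zero-mean disturbance whose per-bin variance is consistently estimated by $\hat{\sigma}^2_{\hat{G}_{bla}^{[i_r]}}$; and (ii) the inverse-variance-weighted sample maximum likelihood estimator that fits the common-denominator model of eq.~\eqref{eq:parBla} to these bins is strongly consistent as the number of bins tends to infinity, provided the true system lies in the model set (Assumption~\ref{ass:modelSet}).

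For step (i) I would decompose each single-realization FRF estimate into the exact BLA, a stochastic nonlinear distortion, and an output-noise term. Under Assumption~\ref{ass:input}, membership in the Riemann equivalence class forces the BLA seen by the actual excitation to differ from the target BLA of eq.~\eqref{eq:BLA} by only $O(N^{-1})$, so the bias vanishes as $N \to \infty$. The standard random-phase multisine and Gaussian results give that the stochastic nonlinear distortions are asymptotically zero-mean, circularly complex normal, and uncorrelated across bins, while Assumption~\ref{ass:noise} makes the noise contribution zero mean with finite variance and independent of the input. Averaging over the $M$ realizations then yields a sample mean FRF with vanishing bias and a frequency-wise sample variance that consistently estimates the total per-bin disturbance covariance.

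For step (ii) I would appeal to the frequency-domain sample maximum likelihood framework. Theorem~\ref{theo:BLA} with eq.~\eqref{eq:BlaCommonDen} guarantees that one common denominator with per-condition numerators represents the true BLA exactly at every operating condition, so under Assumption~\ref{ass:modelSet} the model set contains the data-generating transfer functions and the weighted cost attains its global minimum at the true transfer functions (under the customary denominator normalization). Since the number of excited lines grows with $N$, this is a weighted nonlinear least squares problem with a growing, asymptotically decorrelated, zero-mean data set and consistently estimated weights; the cited convergence theorems then give convergence with probability one of $\hat{\boldsymbol{\theta}}_{bla}$, and hence, by continuity of the parameter-to-transfer-function map, of $\hat{G}_{bla}^{[i_r]}(q,\hat{\boldsymbol{\theta}}_{bla})$ to eq.~\eqref{eq:BLA}.

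The main obstacle, and the source of the hypothesis $M \geq 4$, is that the weights are reciprocals of sample variances formed from only $M$ realizations rather than the true variances. The expectation of the weighted cost and of its gradient therefore involves moments of an inverse (scaled) chi-squared variable, which are finite only when the sample covariance carries enough degrees of freedom; $M \geq 4$ is exactly what makes the relevant reciprocal moments finite, so that the estimation errors in the weights average out over the growing frequency set without biasing the limit. Verifying that this finite-moment property, together with the asymptotic normality and decorrelation of the FRF errors from step (i), satisfies the precise regularity hypotheses of the sample ML consistency theorem is the delicate part; the remainder is bookkeeping and appeal to the cited frequency-domain identification results.
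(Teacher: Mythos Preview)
Your proposal is correct and takes essentially the same approach as the paper: the paper's proof simply cites Section~10.7 and Theorems~10.3 and~9.21 of \cite{Pintelon2012} together with Theorem~\ref{theo:BLA}, and what you have written is a faithful unpacking of exactly those results---the two-stage robust-BLA-then-sample-ML argument, the role of the Riemann-equivalence class in eliminating the $O(N^{-1})$ bias, and in particular the $M\geq 4$ requirement coming from the finiteness of the inverse sample-variance moments needed for the nonparametric-noise-model version of the sample ML consistency theorem.
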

	\begin{proof}
		See Section 10.7 and Theorems 10.3 and 9.21 in \cite{Pintelon2012} combined with Theorem \ref{theo:BLA}.
	\end{proof}
	Since a nonparametric noise model is used during the identification, a minimum of 4 realizations $M$ is required to obtain convergence of the parametric BLA estimate to its expected value (see Theorem 10.3 in \cite{Pintelon2012}). This can be relaxed if a parametric rather than a nonparametric noise model is estimated.
			
	\begin{thm} \label{theo:consistency}
		The proposed estimator is a consistent (with probability 1 for $N \rightarrow \infty$) estimator of the class of parallel Wiener\-/Hammerstein systems defined by Assumptions \ref{ass:system}, \ref{ass:evenNl}, \ref{ass:polezero}, and \ref{ass:poles} for the  Riemann equivalence class of asymptotically normally distributed excitation signals (Assumption \ref{ass:input}), under the standard assumption of zero-mean additive noise at the output only (Assumption \ref{ass:noise}), and the persistence of excitation condition (Assumption \ref{ass:persis}). Furthermore, the system should be contained in the reachable model set (Assumption \ref{ass:modelSet}) for the estimated parameters to converge to the true parameters of the system, up to the degenerations of the model.
	\end{thm}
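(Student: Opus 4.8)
The plan is to prove consistency compositionally. The estimator is a finite cascade of maps --- BLA estimation, common-denominator parameterization, SVD-based branch decomposition, and pole/zero partitioning with a linear-in-the-parameters static nonlinearity --- and I would show that each map returns, with probability one as $N\to\infty$, the correct limiting object, and that consistency propagates because each map is continuous at the noise-free value of its argument. Since the identifiability analysis of Section \ref{sec:identifiability} shows the parameters are recoverable only up to the $n_{deg}=2n_{br}^2$ degenerations, the target of convergence is the equivalence class of the true system, which matches the qualifier ``up to the degenerations'' in the statement.

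First I would invoke Theorem \ref{theo:BlaParWH1}: under Assumptions \ref{ass:input}, \ref{ass:noise} and \ref{ass:modelSet}, the parameterized BLA $\hat G_{bla}^{[i_r]}(q,\hat{\boldsymbol\theta}_{bla})$ converges with probability one to the true BLA \eqref{eq:BLA} at each operating condition. Combined with Theorem \ref{theo:BLA} and Assumptions \ref{ass:evenNl} and \ref{ass:polezero}, the common denominator of \eqref{eq:parBla} then converges to $\prod_i A_{hs}^{[i]}(q)$, so the estimated poles converge to the true LTI poles --- no branch disappears because $\alpha^{[i]}\neq 0$, and no pole is hidden by a cancellation --- while the stacked numerator matrix $\hat{\boldsymbol D}$ of \eqref{eq:D} converges to $\boldsymbol D=\boldsymbol A\boldsymbol B$.

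Next, for the decomposition I would use that, by Assumption \ref{ass:D}, $\boldsymbol D$ has rank exactly $n_{br}$, so its $(n_{br}+1)$-th singular value is zero while the first $n_{br}$ are strictly positive. This spectral gap makes the dominant right-singular subspace a continuous function of the matrix near $\boldsymbol D$; hence the whitened rank test of \cite{Rolain1997} returns $n_{br}$ almost surely for large $N$, and the span of the recovered numerators $\boldsymbol V_{bla}$ converges to the row space of $\boldsymbol B$, i.e. to the true branch space (recovered only up to the full-rank linear combination catalogued in Section \ref{sec:identifiability}). For the correct pole/zero partition the static-nonlinearity least squares \eqref{eq:nlEst} is then consistent: under persistence of excitation (Assumption \ref{ass:persis}) and linear independence of the basis functions over the signal support, $\tfrac1N \boldsymbol K^T\boldsymbol K$ converges to an invertible regressor covariance, while $\tfrac1N \boldsymbol K^T\boldsymbol v\to 0$ because $v(k)$ is zero-mean and independent of $u$ (Assumption \ref{ass:noise}); hence $\hat{\boldsymbol w}$ converges to the true coefficients and the associated simulation cost converges to the noise floor.

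The hard part will be the model-selection argument: showing that the partition \eqref{eq:dynSplit} minimizing the simulation error is, with probability one, equivalent to the true one. This needs two ingredients. First, every \emph{incorrect} partition must yield an asymptotic cost strictly above the noise floor, so that it is rejected; I would obtain this from Assumption \ref{ass:modelSet} (no undermodeling) together with Assumption \ref{ass:poles}, which makes each pole unambiguously assignable to the front or the back block and therefore prevents a wrong split from being compensated by the static nonlinearity. Second, partitions that merely exchange delays or apply the Section \ref{sec:identifiability} transformations produce the identical input-output map and must be accepted, which is precisely the content of the degeneracy count. Care is needed because the least-squares and selection steps are coupled through data-dependent regressors, so what actually closes the argument is uniform convergence of the costs over the finite set of partitions rather than mere pointwise convergence.
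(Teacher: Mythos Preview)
Your proposal is correct and follows essentially the same compositional route as the paper's own proof: consistency of the parametric BLAs via Theorems \ref{theo:BLA} and \ref{theo:BlaParWH1}, low rank of $\hat{\boldsymbol D}$ and recovery of the branch numerators via the SVD, consistency of the linear-in-the-parameters nonlinearity at the correct pole--zero allocation, and selection of that allocation because every wrong one carries a strictly positive asymptotic bias. You are more explicit than the paper on two points that the paper merely asserts --- the spectral-gap continuity argument for the dominant singular subspace, and the need for uniform (not just pointwise) convergence of the finitely many partition costs to justify the $\arg\min$ --- and you correctly invoke Assumption \ref{ass:D}, which the theorem statement omits but the paper's proof uses implicitly.
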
	
	\begin{proof}
			Due to Assumptions \ref{ass:evenNl}, \ref{ass:polezero}, and \ref{ass:persis} and Theorems \ref{theo:BLA} and \ref{theo:BlaParWH1}, the matrix $\hat{\boldsymbol{D}}$ defined in eq. (\ref{eq:D}) is of low rank. The rank of the matrix $\hat{\boldsymbol{D}}$ is a consistent estimate of the number of parallel branches that is present in the system. The columns of the matrix $\boldsymbol{V}_{bla}$ that correspond to the significant singular values are a consistent estimate for the numerators, hence the zeros that are present in each branch, up to the degeneration of the model structure that is explained in Section \ref{sec:identifiability}.
		
		In the last step of the estimation algorithm, the MIMO static nonlinearity is estimated (eq.~\eqref{eq:nlEst}) for every possible pole-zero allocation. This problem is linear in the parameters, and it is solved with a linear least squares approach. Under Assumption \ref{ass:poles}, the poles and zeros that are allocated in this step are consistent estimates of the true poles and zeros that are present in the system, up to the degenerations of the model structure, as discussed in the previous paragraphs. 
		
		The estimate of the static nonlinearity is consistent for the pole-zero allocation that corresponds to the pole-zero allocation of the true system under Assumption \ref{ass:system}. A bias error will be present for the other pole-zero allocations, since the selected pole-zero allocation does not correspond to the exact pole-zero allocation of the system. Thus, this step results in a consistent estimate of the LTI blocks and the static nonlinearity when considering the pole-zero allocation that results in the smallest estimation error. 
		
		The estimated parameters are consistent and converge to the true parameters under Assumption \ref{ass:modelSet} up to the degenerations of the model structure as explained in Section \ref{sec:identifiability}.
	\end{proof}
	
	\begin{remark}
		It has been observed that in practice the rank determination still works well for small values of $R$ (smaller than $n_d$, larger than $n_{br}$)and a finite number of samples $N$ and realizations $M$.
	\end{remark}

\section{Final optimization} \label{sec:optimization}
	Joining all the previous estimation steps allows one to obtain the model parameters as a succession of estimations of subsets of the parameter vector. Although this results in a consistent estimate when the number of data points $N$ tends to infinity, this typically yields a sub-optimal estimate for a finite number of data samples. To increase the efficiency of the estimator, one can fine-tune all the parameters simultaneously in a final nonlinear\-/in\-/the\-/parameters estimation step. The optimized parameters are obtained by calculating:
	\begin{align}
		\hat{\boldsymbol{\theta}} = \underset{\boldsymbol{\theta}}{\argmin} \: \sum_{k=1}^N \left( y(k)-\hat{y}(k,\boldsymbol{\theta}) \right)^{2},
	\end{align}
	where $\hat{y}(k,\boldsymbol{\theta})$ is the modeled output, depending on the parameters $\boldsymbol{\theta}$. Note that the parameter vector $\boldsymbol{\theta}$ contains all the parameters of the model.

	This cost function unfortunately is non-convex with respect to the parameters $\boldsymbol{\theta}$. A Levenberg-Marquardt algorithm \cite{Pintelon2012} is used to minimize the cost function in a numerically stable and reliable way. This algorithm converges to the local minimum of the cost function that is 'closest' to the initial parameter values. Hence, good initial values of the parameters are very important to ensure the good quality of the final estimates. The positive effect of the proposed estimation method is studied in Section \ref{sec:initStudy}.

\section{Computational aspects} \label{sec:computationAspects}

		The major part of the workload of the proposed estimation algorithm lies in the partitioning of the poles and zeros. Remember that all possible pole-zero partitions are tried in this step (Section \ref{sec:partitioning}). For each partition, a linear least squares estimation needs to be performed. This can be quite demanding with respect to the computation time. To be more specific, consider a BLA with $n$ poles and $n$ zeros. The number of combinations $n_{\mathit{comb}}$ that needs to be scanned is bounded by:
		\begin{align}
			2^{\frac{n}{2}} 2^{n_{br} \frac{n}{2}} \leq n_{\mathit{comb}} \leq 2^{n} 2^{n_{br} n},
		\end{align}
		where $n_{br}$ is the number of parallel branches of the model.
		
		The upper limit is reached when only real poles and zeros are present in the decomposition of the BLA, while the lower limit is reached when all poles and zeros of the BLA decomposition appear in complex conjugate pairs. Typically, most poles and zeros appear as complex conjugate pairs. In practical cases, the actual number of combinations to be scanned will therefore be closer to the lower limit.
		
		For example, consider a BLA of order $n_d = n_c = 10$ in both numerator and denominator, and a 2-branch model. This results in a maximum number of combinations equal to $2^{10}*2^{20}$, which is about one billion combinations. Fortunately, the minimum number is only $32768$. Scanning all possible combinations in the upper limit is clearly not feasible. Scanning all possible combinations for the lower limit of this example is possible, although it remains expensive.
		
		The number of combinations that needs to be scanned can be reduced further by making some extra assumptions or by including prior knowledge about the system. A common assumption is that the linear subsystems should be proper. This reduces the number of combinations to be scanned significantly:
		\begin{align}
			 \sum_{k=0}^{\frac{n}{2}} \left( \frac{\frac{n}{2}!}{k!(\frac{n}{2}-k)!} \right)^{n_{br}} \leq n_{\mathit{comb}} \leq \sum_{k=0}^{n} \left( \frac{n!}{k!(n-k)!} \right)^{n_{br}}
		\end{align}	
		Considering the same example as above, this results in maximum of $184756$, and minimum of $252$ combinations. Scanning all possible combinations in the upper limit is feasible in about a day (considering that trying one possibility takes about 0.5 seconds). Scanning all possible combinations of the lower limit of is fortunately done in a couple of minutes.
		
		The order of the separate LTI-blocks can be fixed at front, and this also reduces the number of combinations that need to be tested. Also, the speed of the algorithm can be improved further by using parallel computing techniques that are nowadays present in, for instance, Matlab and Mathematica.

\section{Measurement example} \label{sec:measurement}
A real-world measurement based identification is performed to illustrate the good performance of the proposed method. First, the measurement setup is introduced. Next, the different steps of the model estimation procedure are shown. Finally, the validation results are discussed.

	\subsection{Measurement setup}
		The device under test (DUT) is a 2-branch parallel Wiener\-/Hammerstein system. The front and back LTI blocks of each branch are third order continuous time IIR filters. The static nonlinearity of each branch is realized with a diode-resistor network.

		The rest of the measurement setup is similar to the setup described in \cite{SchoukensM2012b}. The signals are generated by an arbitrary waveform generator (AWG), the Agilent/HP E1445A, sampling at 625 kHz. An internal low-pass filter with a cut-off frequency of 250 kHz is used as a reconstruction filter for the input signal. The in- and output signals of the DUT are measured by the alias protected acquisition channels (Agilent/HP E1430A) sampling at 78 kHz. The AWG and acquisition cards are clocked by the AWG clock, and hence the acquisition is phase coherent to the AWG. Leakage errors are hereby easily avoided. Finally, buffers are added between the acquisition cards and the in- and output of the DUT to avoid that the measurement equipment would distort the measurements.
		
	\subsection{Input design}
			The generated input signal $u(k)$ is a random phase multisine \cite{Pintelon2012} containing $N=131072$ samples with a flat amplitude spectrum. The excited band ranges from $\frac{f_s}{N}$ to $f_{max} = 20$ kHz, viz.:
			\begin{align}
				u(k) &= A \sum_{n=1}^{n_{max}} \cos(2\pi n \frac{f_s}{N} k + \phi_n),
			\end{align}
			where $n_{max}$ is the integer number closest to $N\frac{f_s}{f_{max}}$. The phases $\phi_n$ are independent uniformly distributed random variables ranging from $\left[0,2\pi\right.\left[\right.$. Twenty independent random phase realizations of the multisines are used at each input level to determine the BLA using the robust method. The input signal is applied at 5 different rms values that are linearly distributed between 100 mV and 1 V.
			
			The signals are measured at a sampling frequency of 78 kHz, which is 8 times slower than the sampling frequency at the generator side. This results in measured input and output signals that contain $N=16384$ measured samples per period.
		
	\subsection{Model estimation}
		This section shows how the different steps of the estimation algorithm are applied on the measurement example. First, the BLA of the system is measured and parameterized. Next, the estimated dynamics are distributed over the different LTI blocks that are present in the model. Finally, the nonlinearity is estimated and a nonlinear optimization of all the parameters of the model is performed.
			
		\subsubsection{BLA estimation and parameterization}
			The BLA is estimated and parameterized as discussed in Sections \ref{sec:measBla} and \ref{sec:paramBla}. The BLAs are parameterized with a discrete time rational transfer function model, with a common denominator. The numerators and denominator are both of order 12. The FRFs of the parameterized BLAs are shown in Figure \ref{fig:blaMeas}. Figure \ref{fig:blaMeas} also shows the noise variance and the total variance on the estimated BLAs. The total variance is the variance that is generated by the nonlinear behavior of the system and by the noise that is present in the measurements \cite{Pintelon2012,Schoukens2012}. The small variation that can be observed in the shape of the FRF of the BLAs will prove to be sufficiently informative to decompose the dynamics over the parallel branches. 
			\begin{figure}
				\centering
					\includegraphics[width=0.95\columnwidth]{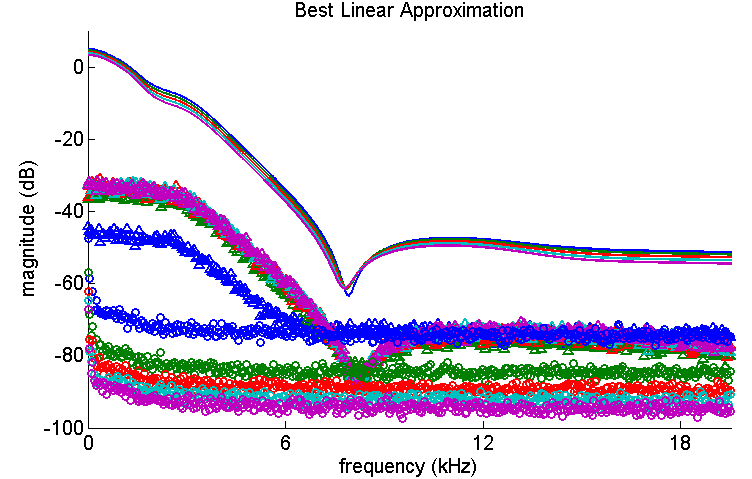}
				\caption{The parameterized BLA for the different excitation rms values. The FRF of the BLAs for the different rms values is shown by the full lines. The total variance on the BLAs is shown with the triangles. The noise variance on the BLAs is shown with the circles.}
				\label{fig:blaMeas}
			\end{figure}
		
		\subsubsection{Splitting the dynamics}
			The estimated dynamics are decomposed over the different parallel branches. Two parallel branches are retrieved by the SVD of the numerator matrix to model the system under test. The decomposed dynamics are then partitioned over the front and the back LTI blocks of the parallel Wiener\-/Hammerstein model. To do so, all the possible pole-zero combinations are scanned. It is assumed that all the LTI-blocks in the model are proper to reduce the number of possible combinations. As a result, a total of 140817 combinations are scanned. The mean square simulation error is used as an error criterion. The error is evaluated using one realization of both the lowest and the highest input excitation level of the estimation data.
			
			The lowest error after the pole-zero allocation scan is obtained with a model that has 4 poles and 4 zeros in the front LTI blocks, and 8 poles and 8 zeros in the back LTI blocks. This candidate model did not converge to a good local minimum after the final optimization step that is described in Section \ref{sec:optimization}. The second lowest error after the initial pole-zero scan (before the optimization step) is obtained with a model that has 6 poles and 6 zeros in the front LTI blocks, and 6 poles and 6 zeros in the back LTI blocks. This corresponds to the hardware realization of the system under test. This model is selected to be refined further in the next steps.
		
		\subsubsection{Estimating the static nonlinearity}
			A multivariate polynomial nonlinearity of order 7 is estimated during the partitioning of the dynamics to the front and the back LTI blocks. To increase the modeling power of the static nonlinear block, this polynomial nonlinearity is replaced by a 2-input 2-output neural network after the separation of the dynamics. The neural network has one hidden layer that contains 10 \textit{tanh(.)} activation functions, and a linear output layer. A \textit{tanh(.)} nonlinear function captures the saturation behavior in the system very well. Afterwards, a final simultaneous optimization of all the parameters is performed to further refine the estimated model.
		
	\subsection{Model validation}
		The estimated model is validated using two different signal types: random phase multisines of different magnitudes, and a growing envelope filtered Gaussian noise signal.
		
		\subsubsection{Multisine validation}
			The model is validated with a random phase multisine realization that is not used in the identification. The experiments are taken at 5 different rms values that are linearly distributed between 0.1 V and 1 V. The quality of the model is shown in Table \ref{tab:measValMulti} using three figures of merit: the rms value of the simulation error $\rms(e)$, the absolute mean value of the simulation error $\mu_e$, and the standard deviation of the simulation error $\sigma_e$, as defined below:
			\begin{align}
				\rms(e) &= \sqrt{\frac{1}{N}\sum_{k=1}^{N}e^2(k)},	\\
				\mu_e &= \left| \frac{1}{N}\sum_{k=1}^{N} e(k) \right|, \\
				\sigma_e &= \sqrt{\frac{1}{N-1}\sum_{k=1}^{N}(e(k)-\mu_e)^2},
			\end{align}
			where $e(k)$ is the difference between the measured output $y(k)$ and the simulated output $\hat{y}(k)$.

			\begin{table*}
				\center
				\caption{Validation error on a multisine signal} \label{tab:measValMulti}	
				\begin{tabular}{p{1.8cm} | p{0.6cm} p{0.6cm} p{0.6cm} | p{0.6cm} p{0.6cm} p{0.6cm} | p{0.6cm} p{0.6cm} p{0.6cm} | p{0.6cm} p{0.6cm} p{0.6cm} | p{0.6cm} p{0.6cm} p{0.6cm}}
					\multicolumn{16}{c}{Validation error (mV)} \\ \toprule[1.5pt]
					$\rms(u)$ 	& \multicolumn{3}{c|}{100}	& \multicolumn{3}{c|}{325}	& \multicolumn{3}{c|}{550}	& \multicolumn{3}{c|}{775}	& \multicolumn{3}{c}{1000} \\ \hline \hline
					Parallel WH & 0.30	& 0.30	& 0.02			& 0.50	& 0.32	& 0.38			& 0.38	& 0.38	& 0.03			& 0.57	& 0.57	& 0.08			& 1.10	& 1.06	& 0.30		 \\ \hline
					WH 					& 2.91 	& 2.90	& 0.31			& 7.36	& 7.25	& 1.32			& 10.43	& 10.41	& 0.60			& 15.11	& 15.08	& 1.02			& 20.24	& 20.20	& 1.28		 \\ \hline
					NARX 				& 3.20 	& 3.00	& 1.10			& 6.86	& 6.85	& 0.28			& 9.92	& 9.92	& 0.39			& 15.19	& 15.16	& 0.92			& 26.42	& 26.41	& 0.77		 \\ \hline
					NOE 				& 2.63 	& 2.62	& 0.24			& 4.90	& 4.89	& 0.13			& 4.44	& 4.44	& 0.03			& 5.54	& 5.54	& 0.10			& 18.55	& 18.54	& 0.59		 \\ \hline
					BLA 				& 1.34 	& 0.82	& 1.07			& 13.66	& 9.36	& 9.94			& 30.92	& 19.86	& 23.69			& 48.86	& 30.61	& 38.08			& 60.12	& 37.63	& 46.88	 	 \\ \bottomrule[1.25pt]
				\end {tabular}	
			\end{table*}
		
			The obtained model outperforms the BLA for every rms value of the input, as can be seen from Table \ref{tab:measValMulti}. Note that a different BLA is used for every rms value of the input, while only one parallel Wiener\-/Hammerstein model is used for all the different rms values of the input. The rms error is a combination of the standard deviation of the simulation error, and the mean value of the simulation error. The BLA is a linear approximation of the system, and cannot model the nonlinearities that are present in the system. The BLA can therefore not model the rms dependent constant contribution to the output that is generated by the nonlinearities. This explains the much larger mean error $\mu_e$ of the model output obtained with the BLA. Also the varying nonlinear contributions in the output cannot be explained by a linear model, and will contribute to the standard deviation of the simulation error. This explains the higher standard deviation of the simulation error. The parallel Wiener\-/Hammerstein model approximates the static nonlinearities that are present in the system quite well. Figure \ref{fig:measValMulti} shows that, indeed, the error on the modeled output of the BLA is coinciding with the level of the total variance on the measured output. This total variance is a measure for the nonlinear behavior of the system \cite{Pintelon2012,Schoukens2012}.
			
			The parallel Wiener\-/Hammerstein model output is compared with the results obtained by a Wiener\-/Hammerstein model in Table \ref{tab:measValMulti}. This Wiener\-/Hammerstein model is estimated similarly to the parallel Wiener\-/Hammerstein model, and uses a neural network with one hidden layer that contains 10 \textit{tanh(.)} activation functions and a linear output layer as a static nonlinearity. The Wiener\-/Hammerstein model is able to obtain a model error that is lower than the BLAs at the different excitation levels, but the errors are still 10 to 20 times larger than the errors of the parallel Wiener\-/Hammerstein model.
			
			The parallel Wiener\-/Hammerstein model is also compared with a neural network NARX model in Table \ref{tab:measValMulti}. The NARX input-output relationship is given by \cite{Billings2013}:
			\begin{align}
				y(k) &= f\left( u(k),\ldots,u(k-n_b),y(k-1),\ldots,y(k-n_a) \right) \nonumber \\
						 & + e(k),
			\end{align}
			where $n_b, n_a = 12$, $f(.)$ is a static nonlinear function, and $e(k)$ is white additive noise. Here, $f(.)$ is described by a neural network with one hidden layer that contains 25 \textit{tanh(.)} activation functions and a linear output layer. The estimation of the NARX model is performed using the Matlab Neural Network Toolbox using the so-called series-parallel architecture. The NARX model performs quite well, similar to the Wiener-Hammerstein model. The error obtained with the parallel Wiener-Hammerstein model is still 10 to 20 times smaller than the errors of the NARX model.
			
			The result that is obtained with the NARX model is further improved using a nonlinear output error model (NOE in Table \ref{tab:measValMulti}). Here, the delayed instances of the measured (noisy) outputs are no longer used in the regressor matrix, they are replaced by delayed instances of the noiseless output:
			\begin{align}
				\hat{y}(k) &= f\left( u(k),\ldots,u(k-n_b),\hat{y}(k-1),\ldots,\hat{y}(k-n_a) \right) \nonumber \\
							y(k) &= \hat{y}(k) + e(k),
			\end{align}
			where $\hat{y}$ denotes the noiseless output. This corresponds to the parallel architecture in the Matlab Neural Network Toolbox. The estimation of the parameters is performed using the Matlab Neural Network Toolbox. This results in an error which is over 30\% smaller than the error of the NARX model. However, the parallel Wiener-Hammerstein model still outperforms the NOE model (see Table \ref{tab:measValMulti}).
			
			The model error of the parallel Wiener\-/Hammerstein model is 30 to 40 dB lower than the total variance on the output (see Figure \ref{fig:measValMulti}), and it is only 10 dB higher than the output noise variance level. This shows that the proposed identification method captures the nonlinear behavior of the system very well. Therefore, it results in a high quality model.
			
			\begin{figure}
				\centering
					\includegraphics[width=0.95\columnwidth]{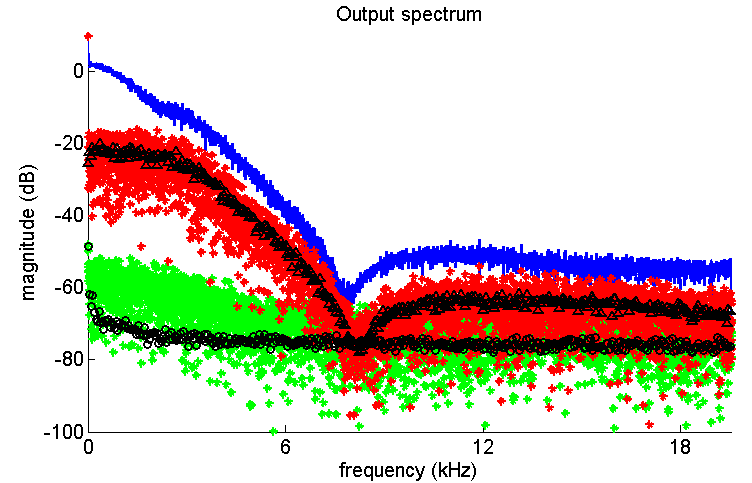}
				\caption{Measured and simulated output spectrum of a validation dataset. The measured output is shown in blue. The model error of the parallel Wiener\-/Hammerstein model is shown with the green plus symbols. The model error of the BLA is shown with the red plus symbols. The noise level at the system output is shown with the bottom black circles. The total distortion level at the output is shown with the top black triangles.}
				\label{fig:measValMulti}
			\end{figure}
		
		\subsubsection{Growing envelope validation}
			A second validation signal is used to assess the model quality over a broad amplitude range of the input in one signal. The input is a filtered Gaussian noise signal with an envelope that grows linearly over time:
			\begin{align}
				u(k) &= \frac{2k}{N}[H(q)r(k)],
			\end{align}
			where $r(k)$ is zero-mean white Gaussian noise with a standard deviation equal to one, and $H(q)$ is a 6th order low-pass Chebychev filter with a cut-off frequency located at 20 kHz and a passband ripple of 0.5 dB. Note that this is a generalization of the input signals that are used during the estimation. During the last part of the growing envelope input signal, the excitation amplitude is higher than the magnitude of the signals used in the estimation of the model. The rms value of the last portion of the growing envelope input signal is 1.4 V, where the maximum rms value during the estimation step was 1 V. This shows that the obtained model is even capable of extrapolating, although it is not advisable to rely on this property.
			
			\begin{figure}
				\centering
					\includegraphics[width=0.95\columnwidth]{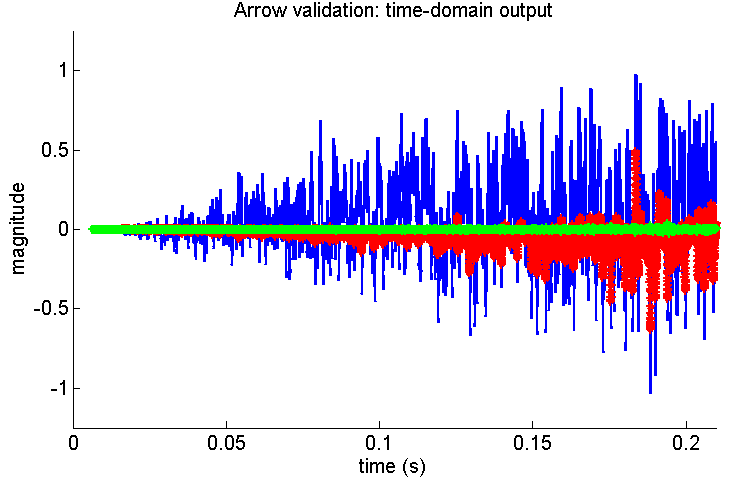}
				\caption{Time domain output of the growing envelope validation. The measured output is shown in blue. The model error of the parallel Wiener\-/Hammerstein model is shown with the green stars. The model error of the BLA is shown with the red stars.}
				\label{fig:arrowOutput}
			\end{figure}
			
			The parallel Wiener\-/Hammerstein model outperforms the BLA again. The results obtained for the different models are shown in Table \ref{tab:measValArrow} and in Figure \ref{fig:arrowOutput}. The BLA is obtained for an input rms value of 0.775 V. It is also clear from the obtained results that the model still performs well in the last quarter of the growing envelope input (after 0.15 seconds). This is the region where the model extrapolates. This proves the robustness of the obtained parallel Wiener\-/Hammerstein model with a neural network nonlinearity for this specific example. The rms errors of the BLA are about 10 to 20 times larger (20 to 26 dB) than the errors of the parallel Wiener\-/Hammerstein model. The Wiener\-/Hammerstein model, the NARX model and the NOE model are again able to obtain model errors that are lower than the model error of the BLA, but the errors are still about 5 to 10 times larger (20 dB) than the errors of the parallel Wiener\-/Hammerstein model.
			
%			\begin{table*}
%				\center
%				\caption{Validation error on a growing envelope signal} \label{tab:measValArrow}	
%				\begin{tabular}{p{0.9cm} p{0.8cm} | p{0.8cm} p{0.8cm} p{0.8cm} | p{0.8cm} p{0.8cm} p{0.8cm} | p{0.8cm} p{0.8cm} p{0.9cm} | p{0.8cm} p{0.8cm} p{0.9cm}}
%					\multicolumn{13}{c}{Validation error (mV)}																																																													\\ \toprule[1.5pt]
%							&						& \multicolumn{3}{c|}{Parallel WH} 				& \multicolumn{3}{c|}{WH}						& \multicolumn{3}{c|}{NARX}						& \multicolumn{3}{c}{BLA}							\\ 
%					 		& $\rms(u)$	& $\rms(e)$ 	& $\sigma_e$	& $\mu_e$ 	& $\rms(e)$	& $\sigma_e$		& $\mu_e$	& $\rms(e)$	& $\sigma_e$		& $\mu_e$	& $\rms(e)$	& $\sigma_e$		& $\mu_e$	\\ \hline \hline
%					 total	&	 822.30			&	2.66	& 2.64	& 0.36 			&	20.20 		& 20.20				& 0.03		&	15.35 		& 15.05				& 3.00		&	55.74 		& 46.70				& 30.44 	\\ \hline
%					 part 1	&	 179.07			& 0.36				& 0.30				& 0.19			& 4.44			& 4.36				& 0.82		& 2.40			& 2.40				& 0.13 		& 11.19			& 11.10				& 1.49		\\ \hline
%					 part 2	&	 522.89			& 0.78				& 0.74				& 0.24			& 10.22			& 10.21				& 0.47		& 6.17			& 5.99				& 1.46 		& 31.08			& 22.42				& 21.53 	\\ \hline
%					 part 3	&	 889.42			& 1.86				& 1.84				& 0.28			& 17.32			& 17.29				& 1.02		& 12.64			& 12.11				& 3.64		& 55.78			& 33.12				& 44.89 	\\ \hline
%					 part 4	&	 1268.1			& 4.92				& 4.86				& 0.74			& 34.76			& 34.77				& 0.15		& 27.18			& 26.26				& 7.02		& 90.70			& 72.99				& 53.86 	\\ \bottomrule[1.25pt]
%				\end {tabular}												
%			\end{table*}
			
			\begin{table*}
				\center
				\caption{Validation error on a growing envelope signal} \label{tab:measValArrow}	
				\begin{tabular}{p{1.8cm} | p{0.6cm} p{0.6cm} p{0.6cm} | p{0.6cm} p{0.6cm} p{0.6cm} | p{0.6cm} p{0.6cm} p{0.6cm} | p{0.6cm} p{0.6cm} p{0.6cm} | p{0.6cm} p{0.6cm} p{0.6cm}}
					\multicolumn{16}{c}{Validation error (mV)} \\ \toprule[1.5pt]
								& \multicolumn{3}{c|}{total} & \multicolumn{3}{c|}{quarter 1} & \multicolumn{3}{c|}{quarter 1} & \multicolumn{3}{c|}{quarter 1} & \multicolumn{3}{c}{quarter 1} \\
					$\rms(u)$ 	& \multicolumn{3}{c|}{822.30}	& \multicolumn{3}{c|}{179.07}	& \multicolumn{3}{c|}{522.89}	& \multicolumn{3}{c|}{889.42}	& \multicolumn{3}{c}{1268.1} \\ \hline \hline
					Parallel WH & 2.66	& 2.64	& 0.36			& 0.36	& 0.30	& 0.19			& 0.78	& 0.74	& 0.24			& 1.86	& 1.84	& 0.28			& 4.92	& 4.86	& 0.74		 \\ \hline
					WH 					& 20.20 & 20.20	& 0.03			& 4.44	& 4.36	& 0.82			& 10.22	& 10.21	& 0.47			& 17.32	& 17.29	& 1.02			& 34.77	& 34.77	& 0.15		 \\ \hline
					NARX 				& 18.77 & 18.53	& 3.01			& 3.52	& 3.00	& 1.83			& 8.60	& 8.60	& 0.20			& 17.76	& 17.54	& 2.80			& 31.75	& 30.93	& 7.22		 \\ \hline
					NOE 				& 22.93 & 22.32	& 0.82			& 1.98	& 1.80	& 0.82			& 6.28	& 5.81	& 2.39			& 17.50	& 15.45	& 8.21			& 41.89	& 40.76	& 9.69		 \\ \hline
					BLA 				& 55.74 & 46.70	& 30.44			& 11.19	& 11.10	& 1.49			& 31.08	& 22.42	& 21.53			& 55.78	& 33.12	& 44.89			& 90.70	& 72.99	& 53.86	 	 \\ \bottomrule[1.25pt]
				\end {tabular}	
			\end{table*}
			
		\subsection{Study of the initialization procedure} \label{sec:initStudy}
			A good initial estimate is a key factor to start the further optimization of the parameters if a high quality model is to be obtained. In this section we run the proposed algorithm until it arrives at the model selection step that is described in Section \ref{sec:modelSelection}. The models that correspond to the 100 best pole-zero allocations are optimized, and the models corresponding to 100 random pole-zero allocations are also optimized separately. All the pole-zero allocations that are considered have 6 poles and 6 zeros in the front LTI blocks and 6 poles and 6 zeros in the back LTI blocks to match with the system under test. The Levenberg-Marquardt optimization algorithm is stopped after 500 iterations, or sooner when convergence is reached. 
			
			It is clear from the results shown in Figure \ref{fig:randomVsBestInit} that the chance to obtain a good final model is higher when the best initial estimates are selected to be optimized further. The median error is more then 4 dB lower when the best initial estimates are selected (this is almost a factor 2 in rms error), compared with just picking randomly a pole-zero allocation set. Also, the variability of the final result is much lower when we start from the 100 best initial estimates.
			
			\begin{figure}
				\centering
					\includegraphics[width=0.95\columnwidth]{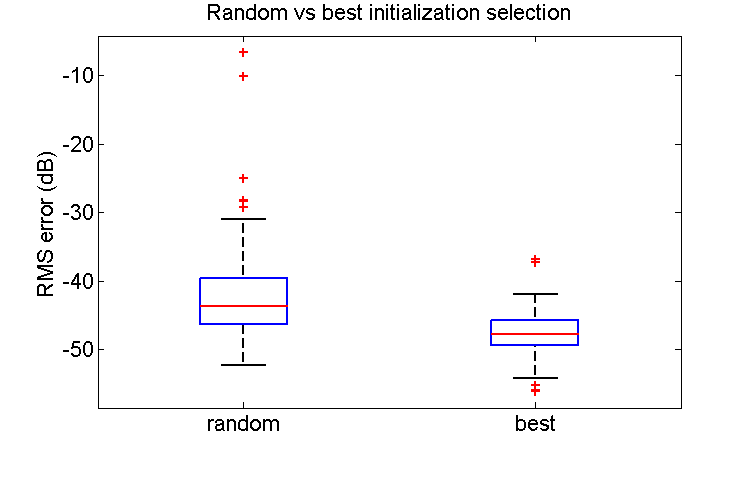}
				\caption{Boxplot of the rms error obtained after optimization using different initialization schemes. The boxplot on the left is obtained using a random pole-zero allocation over the LTI blocks of the model. The boxplot on the right uses the 100 pole-zero allocation resulting in the best candidate models.}
				\label{fig:randomVsBestInit}
			\end{figure}

\section{Conclusion} \label{sec:conclusion}
		An identification method for parallel Wiener\-/Hammerstein systems starting from input-output data only is presented. In the first step, the best linear approximation is estimated for different input excitation levels. In the second step, the dynamics are decomposed over a number of parallel orthogonal branches. Next, the dynamics of each branch are partitioned into a linear time-invariant subsystem at the input and a linear time-invariant subsystem at the output of each branch of the model. The static nonlinear block is also estimated during this step using a model that is linear in the parameters. This linear\-/in\-/the\-/parameters model can be replaced afterwards to increase the model flexibility. Finally, a nonlinear least squares optimization of the parameters of all blocks together is performed to refine the estimates. The consistency, and the computational complexity of the proposed initialization approach are discussed. The good performance of the proposed method, and the importance of a good initial estimate is illustrated on a measurement example. 
	 
\begin{ack}                               % Place acknowledgements
	 This work was supported in part by the VUB (SRP-19), the Fund for Scientific Research (FWO-Vlaanderen), the Methusalem grant of the Flemish Government (METH-1), by the Belgian Government through the Inter university Poles of Attraction IAP VII/19 DYSCO program, and the ERC advanced grant SNLSID, under contract 320378. M. Schoukens is currently an FWO Aspirant, supported by FWO-Vlaanderen.
\end{ack}	

\bibliographystyle{plain}        													 % Include this if you use bibtex 
\bibliography{ReferencesLibraryV2}    										 % and a bib file to produce the 
													                                 % bibliography (preferred). The
													                                 % correct style is generated by
													                                 % Elsevier at the time of printing.

\end{document}